\theoremstyle{plain}
\newtheorem{definition}{Definition}
\newtheorem{lemma}{Lemma}
\newtheorem{proposition}{Proposition}
\newtheorem{theorem}{Theorem}
\numberwithin{equation}{section}
\renewcommand{\t}{\noindent}
\newcommand{\x}{\langle x\rangle^{\alpha}}
  \newcommand{\maxi}[1]{\left\langle #1\right\rangle }
 \let\maxi=\maxi
\DeclareMathOperator{\s}{\mathbf{S}}
\DeclareMathOperator{\f}{\mathcal{F}}
\newcommand{\Qp}{\mathbb{Q}_p}  
\newcommand{\C}{\mathbb{C}}  
\newcommand{\R}{\mathbb{R}}  
\newcommand{\Q}{\mathbb{Q}}  
\newcommand{\Z}{\mathbb{Z}}  
\providecommand{\abs}[1]{\lvert#1\rvert}
\begin{document}
\title[$p$-adic Markov process and the problem of the first return over balls]{ $p$-adic Markov process and the problem of the first return over balls}
\author{O. F. Casas-S\'{a}nchez}
\address{Universidad Pedag\'ogica y Tecnol\'ogica de Colombia, Escuela de Matem\'{a}ticas y Estadística\\
Tunja, Colombia}
\email{oscar.casas01@uptc.edu.co}
\author{J. Galeano-Pe\~{n}aloza}
\address{Universidad Nacional de Colombia, Departamento de Matem\'aticas\\
Ciudad Universitaria, Bogot\'a D.C., Colombia}
\email{jgaleanop@unal.edu.co}
\author{J. J. Rodr\'{\i}guez-Vega}
\address{Universidad Nacional de Colombia, Departamento de Matem\'aticas\\
	Ciudad Universitaria, Bogot\'a D.C., Colombia}
\email{jjrodriguezv@unal.edu.co}
\begin{abstract}
Let $\x=(\max\{|x|_{p},p^r\})^{\alpha}$ and $H^{\alpha}\varphi=\f^{-1}[(\maxi \xi^\alpha -p^{r\alpha})\f\varphi]$, in this article we study the Markov process associated to this operator and the first passage time problem associated to $H^{\alpha}$.\\
\textsc{Keywords:} Random walks, ultradiffusion, p-adic numbers, non-archimedean analysis.\\
\textsc{MSC2010:} 82B41, 82C44, 26E30.
\end{abstract}
\maketitle

\section{Introduction}

Avetisov et al. have constructed a wide variety of models of ultrametric diffusion constrained
by hierarchical energy landscapes (see \cite{Avetisov1}, \cite{Avetisov2}). From a mathematical point of
view, in these models the time-evolution of a complex system is described by a p-adic
master equation (a parabolic-type pseudodifferential equation) which controls the time  evolution
of a transition function of a random walk on an ultrametric space, and the
random walk describes the dynamics of the system in the space of configurational states
which is approximated by an ultrametric space ($\Qp$).

The problem of the first return in dimension 1 was studied in \cite{Av-2}, and in arbitrary
dimension in \cite{CHA} and \cite{TZ}. In these articles, pseudodifferential operators with radial symbols were
considered. More recently, Chac\'{o}n-Cort\'{e}s \cite{CHA2} considers pseudodifferential operators over $\Qp^4$ with non-radial symbol; he studies the problem of first return for a random walk $X(t, w)$ whose density distribution satisfies certain diffusion equation.

In this paper we define the operator 
$$ H^{\alpha}\varphi=\f^{-1}[(\maxi \xi^\alpha -p^{r\alpha})\f\varphi],$$
for $ \varphi\in \s(\Qp)$, where $\maxi{\xi}=\max\{|\xi|_p,p^r\}$.  We also define the heat-kernel $Z_r$ as
		\begin{equation}
 		Z_r(x,t):=\int_{\Qp} \chi(-x\xi)\,e^{-t\bigl(\maxi \xi^\alpha-p^{r\alpha}\bigr)}\,d\xi,
 		\end{equation}
heat kernels of this type have been studied in \cite{CR}, we show that function
\begin{align}\label{solucionclas}
u(x,t)&=Z_r(x,t)\ast\Omega(\left| x\right| _p)
		=\int\limits_{\Qp}\chi(-x\xi)e^{-t\left( \maxi{\xi}^\alpha-p^{r\alpha}\right)\noindent }\Omega(|\xi|_p)\ d\xi
\end{align}
is a solution of Cauchy problem
	\begin{equation}
	\begin{cases}
	u\in C\bigl([0,\infty],\s(\Qp)\bigr)\cap C^1\bigl([0,\infty],L^2(\Qp)\bigr)\\
	\dfrac{\partial u}{\partial t}(x,t)+\bigl(H^{\alpha}\,u\bigr)(x,t)=0,\quad
	x\in \Qp,\quad t\in (0,T], \quad \alpha >0 \\
	u(x,0)=\Omega(|\xi|_p), 
	\end{cases}
	\end{equation}
and we show that $Z_r(x,t)$ is the transition density of a time and space homegeneous Markov process, which is bounded, right-continuous and has no discontinuities other than jumps.

Finally, we study the first passage time problem associated to the operator $H^{\alpha}$.

\section{\label{Section1}Preliminaries}

In this section we fix the notation and collect some basic results on $p$-adic
analysis that we will use through the article. For a detailed exposition on
$p$-adic analysis the reader may consult \cite{Albeverio2010, Taibleson, Vladimirov}.

\subsection{The field of $p$-adic numbers}

Along this article $p$ will denote a prime number. The field of $p-$adic numbers $\Qp$ is defined as the completion of the field of rational numbers $\Q$ with respect to the $p-$adic norm
$|\cdot|_{p}$, which is defined as
\[
|x|_{p}=
\begin{cases}
0 & \text{if }x=0,\\
p^{-\gamma} & \text{if }x=p^{r}\frac{a}{b},
\end{cases}
\]
where $a$ and $b$ are integers coprime with $p$. The integer $\gamma:=ord(x)$,
with $ord(0):=+\infty$, is called the $p-$\textit{adic order of} $x$.

 Any $p-$adic number $x\neq0$ has a unique expansion $x=p^{ord(x)}\sum_{j=0}^{\infty}x_{j}p^{j}$, where $x_{j}\in\{0,1,2,\dots
,p-1\}$ and $x_{0}\neq0$. By using this expansion, we define \textit{the
fractional part of }$x\in\Qp$, denoted $\{x\}_{p}$, as the rational
number
\[
\{x\}_{p}=%
\begin{cases}
0 & \text{if }x=0\text{ or }ord(x)\geq0,\\
p^{\text{ord}(x)}\sum_{j=0}^{-ord(x)-1}x_{j}p^{j} & \text{if }ord(x)<0.
\end{cases}
\]
For $r\in\mathbb{Z}$, denote by $B_{r}(a)=\{x\in\Qp:|x-a|_{p}\leq p^{r}\}$ \textit{the ball of radius }$p^{r
}$ \textit{with center at} $a\in\Qp$, and
take $B_{r}(0):=B_{r}$. 

\subsection{The Bruhat-Schwartz space}
A complex-valued function $\varphi$ defined on $\Qp$ is
\textit{called locally constant} if for any $x\in\Qp$ there
exists an integer $l(x)\in\mathbb{Z}$ such that
\begin{equation}
\varphi(x+x^{\prime})=\varphi(x)\text{ for }x^{\prime}\in B_{l(x)}.
\label{local_constancy}%
\end{equation}
The space of locally constant functions is denoted by $\mathcal{E}(\Qp)$.  A function $\varphi:\Qp\rightarrow\C$ is called a
\textit{Bruhat-Schwartz function (or a test function)} if it is locally
constant with compact support. The $\C$-vector space of
Bruhat-Schwartz functions is denoted by $\s(\Qp)$. For
$\varphi\in\s(\Qp)$, the largest of such number
$l=l(\varphi)$ satisfying (\ref{local_constancy}) is called \textit{the
exponent of local constancy of} $\varphi$.

Let $\s^{\prime}(\Qp)$ denote the set of all
functionals (distributions) on $\s(\Qp)$. All
functionals on $\s(\Qp)$ are continuous.

Set $\chi(y)=\exp(2\pi i\{y\}_{p})$ for $y\in\Qp$. The map
$\chi(\cdot)$ is an additive character on $\Qp$, i.e. a continuos
map from $\Qp$ into $S$ (the unit circle) satisfying $\chi
(y_{0}+y_{1})=\chi(y_{0})\chi(y_{1})$, $y_{0},y_{1}\in\Qp$.

\subsection{Fourier transform}

Given $\xi$ and $x\in
\Qp$, the
Fourier transform of $\varphi\in\s(\Qp)$ is defined as
\[
(\f\varphi)(\xi)=\int_{\Qp}\chi(\xi 
x)\varphi(x)dx\quad\text{for }\xi\in\Qp,
\]
where $dx$ is the Haar measure on $\Qp$ normalized by the
condition $vol(B_{0})=1$. The Fourier transform is a linear isomorphism
from $\s(\Qp)$ onto itself satisfying $(\f(\f\varphi))(\xi)=\varphi(-\xi)$. We will also use the notation
$\f_{x\rightarrow\xi}\varphi$ and $\widehat{\varphi}$\ for the
Fourier transform of $\varphi$.

The Fourier transform $\f\left[  f\right]  $ of a distribution
$f\in\s^{\prime}\left(  \Qp\right)  $ is defined by%
\[
\left(  \f\left[  f\right]  ,\varphi\right)  =\left(  f,\f%
\left[  \varphi\right]  \right)  \text{ for all }\varphi\in\s\left(
\Qp\right)  \text{.}%
\]
The Fourier transform $f\rightarrow\f\left[  f\right]  $ is a linear
isomorphism from $\s^{\prime}\left(  \Qp\right)
$\ onto $\s^{\prime}\left(  \Qp\right)  $. Furthermore,
$f=\f\left[  \f\left[  f\right]  \left(  -\xi\right)
\right]  $.

\section{Pseudodifferential operators}

\begin{definition}
	For all $\alpha\in \C$ we define the following pseudodifferential operator
	\begin{equation}
	H^{\alpha}\varphi=\f^{-1}[(\maxi \xi^\alpha -p^{r\alpha})\f\varphi], \quad \varphi\in \s(\Qp).
	\end{equation}
	Where $\displaystyle{\maxi \xi^\alpha=\max\{|\xi|_p, p^{r}\}}$.
\end{definition}
\t It is clear that the map $H^{\alpha}:\s(\Qp)\rightarrow \s(\Qp)$ is continuous.  Also it is possible to show that the pseudodifferential operator $H^{\alpha}$ has the following representation integral
	\begin{equation}\label{repint}
	(H^{\alpha}\varphi)(x)=\dfrac{1-p^\alpha}{1-p^{\alpha+1}}\left[ p^{r(\alpha+1)}\int\limits_{|y|_p\leq p^{-r}}\varphi(x-y)-\varphi(x)\ dy-p^{(\alpha+1)}\int\limits_{|y|_p\leq p^{-r}}\dfrac{\varphi(x-y)-\varphi(x)}{|y|_p^{\alpha+1}}\ dy\right]. 
	\end{equation}

\begin{definition} Set $\alpha_{k}:=\dfrac{2k\pi i}{\text{ln}\ p}$, $k\in \Z$,
\begin{equation*}
K_{\alpha}(x):=\begin{cases}
\left[\dfrac{1-p^{\alpha}}{1-p^{-\alpha-1}}\abs{x}_{p}^{-\alpha-1}+p^{r(\alpha+1)}\dfrac{1-p^{\alpha}}{1-p^{\alpha+1}}\right]\Omega(p^{r} \abs{x}_{p}), & \text{ for }\alpha \not =-1+\alpha_{k}\\
(1-p^{-1})\Omega(p^{r}\abs{x}_{p})((1-r)- \log_{p} \abs{x}_{p}) & \text{ for } \alpha=-1+\alpha_{k},
\end{cases}
\end{equation*}
and for $\alpha=0$ we define $K_0=\delta.$
\end{definition}

After some calculations it is possible to show the following result.
\begin{theorem}\label{transf-K}
		The Fourier transform of $K_{\alpha}$ is given by $\langle \xi \rangle ^{\alpha}$ for all $\alpha \in \mathbb{C}$.
\end{theorem}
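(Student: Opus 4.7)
The plan is to compute $\widehat{K_\alpha}$ directly by splitting $K_\alpha$ into two pieces and treating each by elementary $p$-adic integration, then verifying the two output cases $|\xi|_p \leq p^r$ and $|\xi|_p > p^r$ separately.

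First, for $\alpha \neq -1+\alpha_k$, write $K_\alpha = A \cdot |x|_p^{-\alpha-1}\Omega(p^r|x|_p) + B \cdot \Omega(p^r|x|_p)$, where $A=\tfrac{1-p^\alpha}{1-p^{-\alpha-1}}$ and $B= p^{r(\alpha+1)}\tfrac{1-p^\alpha}{1-p^{\alpha+1}}$. The Fourier transform of the cut-off $\Omega(p^r|x|_p)$ is a change-of-variables computation: substituting $x = p^r u$ with $u \in \Zp$ gives $\mathcal{F}[\Omega(p^r|x|_p)](\xi) = p^{-r}\Omega(p^{-r}|\xi|_p)$. This term therefore vanishes for $|\xi|_p > p^r$ and contributes $p^{r\alpha}\tfrac{1-p^\alpha}{1-p^{\alpha+1}}$ for $|\xi|_p \leq p^r$.

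The main computation is the integral
\[
I(\xi) := \int_{|x|_p\le p^{-r}} \chi(\xi x)\,|x|_p^{-\alpha-1}\,dx = \sum_{k \le -r} p^{-k(\alpha+1)} \int_{|x|_p = p^k} \chi(\xi x)\,dx.
\]
I would use the classical evaluation of sphere integrals: writing $|\xi|_p = p^m$, the inner integral equals $p^k(1-p^{-1})$ when $m+k\le 0$, equals $-p^{k-1}$ when $m+k=1$, and vanishes otherwise. In the case $|\xi|_p \le p^r$ (so $m \le r \le -k$), only the first branch occurs, yielding a geometric sum $(1-p^{-1})\sum_{j\ge r} p^{j\alpha}$ which, for $\mathrm{Re}(\alpha)<0$, sums to $(1-p^{-1})p^{r\alpha}/(1-p^\alpha)$; the identity then extends to all admissible $\alpha$ by analytic continuation. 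In the case $|\xi|_p > p^r$ (so $m\ge r+1$), both the geometric-series branch ($k\le -m$) and the boundary term ($k=1-m$) contribute, and summing produces $|\xi|_p^\alpha\bigl[\tfrac{1-p^{-1}}{1-p^\alpha} - p^{-\alpha-1}\bigr]$.

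Multiplying by $A$ and adding the $B$-term, each case collapses to a clean algebraic identity: for $|\xi|_p\le p^r$ one gets
\[
p^{r\alpha}\Bigl[\tfrac{1-p^{-1}}{1-p^{-\alpha-1}} + \tfrac{1-p^\alpha}{1-p^{\alpha+1}}\Bigr] = p^{r\alpha},
\]
and for $|\xi|_p > p^r$ the factor $\tfrac{1}{1-p^{-\alpha-1}}\bigl[(1-p^{-1})-p^{-\alpha-1}+p^{-1}\bigr] = 1$ gives $|\xi|_p^\alpha$. Combining the two cases yields $\widehat{K_\alpha}(\xi) = \langle\xi\rangle^\alpha$. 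The exceptional values $\alpha = -1+\alpha_k$ (where $p^{\alpha+1}=p$ makes denominators vanish) are handled by taking the limit $\alpha \to -1+\alpha_k$ in the formula; the logarithmic factor $((1-r)-\log_p|x|_p)$ in the definition of $K_\alpha$ arises precisely as the derivative of $p^{-k\alpha}$ in $\alpha$ appearing after L'Hôpital. The main obstacle is bookkeeping the sphere-integral cases cleanly and verifying these two algebraic collapses; the exceptional case is then a routine limit once the generic case is established.
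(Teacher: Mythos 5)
Your computation is correct, and since the paper gives no proof of Theorem \ref{transf-K} beyond the remark that it follows ``after some calculations,'' your argument supplies exactly the calculation being alluded to: transform the cut-off term, which gives $p^{-r}\Omega(p^{-r}|\xi|_p)$; evaluate $\int_{|x|_p\le p^{-r}}\chi(\xi x)\,|x|_p^{-\alpha-1}\,dx$ by summing the standard sphere integrals; and check the two algebraic collapses. I verified both, including the identity $\frac{1-p^{-1}}{1-p^{-\alpha-1}}+\frac{1-p^{\alpha}}{1-p^{\alpha+1}}=1$ in the region $|\xi|_p\le p^r$ and the factor $1$ in the region $|\xi|_p>p^r$, so the generic case $\mathrm{Re}(\alpha)<0$ is complete. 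Two small points. First, your parenthetical about the exceptional values is off: at $\alpha=-1+\alpha_k$ one has $p^{\alpha+1}=e^{2k\pi i}=1$, not $p$; it is for that reason that the denominators $1-p^{\alpha+1}$ and $1-p^{-\alpha-1}$ vanish, whereas $p^{\alpha+1}=p$ occurs at $\alpha=\alpha_k$, where instead the numerator $1-p^{\alpha}$ vanishes (and where, incidentally, the displayed formula gives $K_{\alpha_k}\equiv 0$ for $k\neq 0$, so those values would need the same special treatment as $\alpha=0$; that is a defect of the statement, not of your proof). Second, for $\mathrm{Re}(\alpha)\ge 0$ the function $|x|_p^{-\alpha-1}\Omega(p^r|x|_p)$ is not locally integrable at the origin, so ``extends by analytic continuation'' needs one more sentence specifying in what sense $K_\alpha$ is a distribution there --- for instance via the regularization implicit in \eqref{repint}, where the kernel acts on differences $\varphi(x-y)-\varphi(x)$, or by defining $K_\alpha$ for $\mathrm{Re}(\alpha)\ge 0$ as the analytic continuation of the family you computed. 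Once that convention is fixed, your continuation argument and the limit $\alpha\to -1+\alpha_k$ producing the logarithmic kernel (a routine L'H\^opital computation, which I also checked is consistent with the paper's formula) go through.
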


\begin{definition}\label{radial}
For $x\in \Qp, t\in \R$	the heat kernel is defined as
	\begin{equation}
	Z_r(x,t):=\int_{\Qp} \chi(-x\xi)\,e^{-t\bigl(\maxi \xi^\alpha-p^{r\alpha}\bigr)}\,d\xi.
	\end{equation}
The following properties are  proved in \cite{CR}.
\end{definition}
\begin{lemma}\label{prop of Z}
	For $\alpha>0$, $t>0$, the following assertions hold.
	\begin{enumerate}
		\item 
		$Z_r(x,t)\in C(\Qp,\R)\cap L^1(\Qp)\cap L^2(\Qp)$, for $t>0$.
		\item 
		$Z_r(x,t)\geq0$ for all $x\in\Qp$.
		\item 
		$\displaystyle {\int_{\Qp} Z_r(x,t)\,dx=\int_{|x|_p\leq p^{-r}} Z_r(x,t)\,dx=1}$.
		\item 
		$\displaystyle\lim_{t\to 0^{+}} Z_r(x,t)\ast\varphi(x) = \varphi(x)$, for $\varphi\in \s(\Qp)$.
		\item 
		$Z_r(x,t)\ast Z(x,t')=Z(x,t+t')$, for $t,t'>0$.
		\item 
		$Z_r(x,t)\leq Ct|x|_p^{-1}\left( \maxi{px^{-1}}^{\alpha}-p^{r\alpha}\right)$.
	\end{enumerate}
\end{lemma}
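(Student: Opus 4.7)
The plan is Fourier-analytic throughout. The key structural observation is that the symbol $A(\xi):=\maxi{\xi}^{\alpha}-p^{r\alpha}$ vanishes identically on the ball $B_{r}=\{|\xi|_{p}\le p^{r}\}$ and equals $|\xi|_{p}^{\alpha}-p^{r\alpha}$ outside it, which yields the decomposition
\[
Z_{r}(x,t)=\int_{|\xi|_{p}\le p^{r}}\chi(-x\xi)\,d\xi+e^{tp^{r\alpha}}\int_{|\xi|_{p}>p^{r}}\chi(-x\xi)\,e^{-t|\xi|_{p}^{\alpha}}\,d\xi.
\]
The first integral equals $p^{r}\Omega(p^{-r}|x|_{p})$ by the classical ball-indicator formula, and the second has the form of a truncated Taibleson-type heat kernel. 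A direct spherical-shell computation, summing $\int_{|\xi|_{p}=p^{n}}\chi(-x\xi)\,d\xi$ against the exponential weight and invoking the standard vanishing criteria for character integrals, shows moreover that $Z_{r}(x,t)=0$ whenever $|x|_{p}>p^{-r}$. This support property is the core of item~(3): the two integrals in (3) then coincide, and evaluating the inverse Fourier representation at $\xi=0$ gives the common value $e^{-tA(0)}=1$.

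With the decomposition in hand, items (1), (2), (4), (5) follow from routine Fourier machinery. Continuity of $Z_{r}(\cdot,t)$ follows from dominated convergence applied to the defining integral, and $L^{1}$-integrability is immediate from (2) and (3); Plancherel together with $e^{-tA(\xi)}\in L^{2}(\Qp)$ (exponential decay at infinity) gives the $L^{2}$ statement. For non-negativity, I would invoke Bochner's theorem on $e^{-tA(\xi)}$: the integral representation (\ref{repint}) exhibits $-H^{\alpha}$ as a L\'evy-type generator with non-negative jump kernel, so $A$ is conditionally negative definite, $e^{-tA}$ is positive definite, and its inverse Fourier transform is a non-negative measure with, by item~(1), continuous density. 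The semigroup identity (5) and the initial condition (4) drop out of $\f[Z_{r}(\cdot,t)\ast Z_{r}(\cdot,t')](\xi)=e^{-(t+t')A(\xi)}=\f[Z_{r}(\cdot,t+t')](\xi)$ and $\f[Z_{r}(\cdot,t)\ast\varphi](\xi)=e^{-tA(\xi)}\hat\varphi(\xi)\to\hat\varphi(\xi)$ as $t\to 0^{+}$ in $\s(\Qp)$.

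The main obstacle is the pointwise upper bound (6). My plan is to subtract a carefully chosen constant inside the exponential: exploit the vanishing $\int_{B_{r}}\chi(-x\xi)\,d\xi=0$ for $|x|_{p}$ large enough to rewrite, for $x$ in the relevant range,
\[
Z_{r}(x,t)=\int_{\Qp}\chi(-x\xi)\bigl(e^{-tA(\xi)}-e^{-tA(\xi_{0})}\bigr)\,d\xi
\]
with $\xi_{0}$ chosen on the critical sphere $|\xi|_{p}=p|x|_{p}^{-1}$. The elementary inequality $|e^{-s}-e^{-u}|\le|s-u|$ for $s,u\ge 0$ then linearizes the exponential and produces the factor of $t$. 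A spherical decomposition of the remaining oscillatory integral and summation of the resulting geometric series should concentrate the dominant contribution on the critical sphere, giving precisely the product $t|x|_{p}^{-1}(\maxi{px^{-1}}^{\alpha}-p^{r\alpha})$. The delicate step is controlling the spheres just above the critical scale, where $A(\xi)$ has already separated from zero but $tA(\xi)$ may still be small, so the naive estimate $|1-e^{-tA}|\le tA$ must be combined with character-sum cancellation in order not to lose the $|x|_{p}^{-1}$ prefactor.
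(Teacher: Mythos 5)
The paper itself offers no argument for this lemma --- the authors simply defer to \cite{CR} --- so the only question is whether your argument stands on its own. Items (1)--(5) are essentially sound: the observation that the symbol $A(\xi)=\maxi{\xi}^{\alpha}-p^{r\alpha}$ is constant on cosets of $B_{r}$ forces $Z_{r}(\cdot,t)$ to be supported in $B_{-r}$, and continuity, integrability, the value of the total mass, the semigroup law and the initial condition all follow by the routine Fourier steps you indicate (minor slip: the ball integral is $p^{r}\Omega(p^{r}|x|_{p})$, not $p^{r}\Omega(p^{-r}|x|_{p})$). For positivity you import the representation (\ref{repint}) --- which this paper also states without proof --- and run it through Schoenberg and Bochner; that is legitimate, but much heavier than needed, because the same sphere-by-sphere computation that gives you the support property already yields, for $|x|_{p}=p^{-m}\leq p^{-r}$, the closed finite formula
\begin{equation*}
Z_{r}(x,t)=p^{r}\bigl(1-e^{-t(p^{(m+1)\alpha}-p^{r\alpha})}\bigr)+\sum_{n=r+1}^{m}p^{n}(1-p^{-1})\bigl(e^{-t(p^{n\alpha}-p^{r\alpha})}-e^{-t(p^{(m+1)\alpha}-p^{r\alpha})}\bigr)
\end{equation*}
(empty sum when $m=r$), from which (2), (3) and (6) are read off at once using $0\leq e^{-a}-e^{-b}\leq b-a$ for $0\leq a\leq b$.

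The genuine gap is item (6): what you give is a plan, not a proof, and you flag the unresolved step yourself (the spheres just above the critical scale). Moreover the rewriting $Z_{r}(x,t)=\int_{\Qp}\chi(-x\xi)\bigl(e^{-tA(\xi)}-e^{-tA(\xi_{0})}\bigr)\,d\xi$ is not meaningful as written, since the subtracted constant is not integrable; to make the subtraction idea rigorous you must truncate to balls and Abel-sum over spheres, which is exactly the closed formula above --- and then no delicate cancellation is needed, because the sum is finite and each difference of exponentials is at most $t$ times the corresponding difference of symbol values. Carrying this out gives $Z_{r}(x,t)\leq t\,|x|_{p}^{-1}\bigl(\max\{p\,|x|_{p}^{-1},p^{r}\}^{\alpha}-p^{r\alpha}\bigr)$, which matches your choice of critical sphere $|\xi|_{p}=p|x|_{p}^{-1}$; note, however, that under the literal reading $\maxi{px^{-1}}=\max\{p^{-1}|x|_{p}^{-1},p^{r}\}$ the stated right-hand side of (6) vanishes on the spheres $|x|_{p}=p^{-r}$ and $|x|_{p}=p^{-r-1}$, where the closed formula shows $Z_{r}(x,t)>0$. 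So you should state explicitly which form of the bound you are proving (the version with $\max\{p\,|x|_{p}^{-1},p^{r}\}$ is the one your own critical-sphere heuristic, and the explicit computation, actually deliver), and replace the divergent-integral step by the finite spherical decomposition.
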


\t If 	we set for $\varphi\in \s(\Qp)$
	\begin{equation}\label{def u}
	u(x,t):=
	\begin{cases}
	Z_r(x,t)\ast\varphi(x), & \text{if } t>0\\
	\varphi(x), & \text{if } t=0,
	\end{cases}
	\end{equation}
then it is easy to see that	$u(x,t)\in \s(\Qp)$ for $t\geq 0$, and also it is possible to show that for $t\geq 0$, $\alpha>0$
	\[
	H^{\alpha}(u(x,t))=\f_{\xi\rightarrow x}^{-1} \left[ 
	(\maxi{\xi}^\alpha-p^{r\alpha})e^{-t(\maxi{\xi}^\alpha-p^{r\alpha})}\widehat{\varphi}(\xi) \right]. 
	\]

\begin{theorem}\label{solution}
	Consider the following Cauchy problem
	\begin{equation}
	\begin{cases}
	u\in C\bigl([0,\infty],\s(\Qp)\bigr)\cap C^1\bigl([0,\infty],L^2(\Qp)\bigr)\\
	\dfrac{\partial u}{\partial t}(x,t)+\bigl(H^{\alpha}\,u\bigr)(x,t)=0,\quad
	x\in \Qp,\quad t\in (0,T], \quad \alpha >0 \\
	u(x,0)=\varphi (x), \quad \varphi\in \s(\Qp),
	\end{cases}
	\end{equation}
	then the function $u(x,t)$ defined in (\ref{def u}) is a solution.
\end{theorem}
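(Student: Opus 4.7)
The strategy is to work on the Fourier side, where the problem linearises completely. Writing
\[
\widehat{u}(\xi,t)=e^{-t(\maxi{\xi}^\alpha-p^{r\alpha})}\,\widehat{\varphi}(\xi),
\]
the key observation is that $\widehat{\varphi}$ is Bruhat--Schwartz, hence has compact support and is locally constant, so every manipulation takes place on a fixed compact set independent of $t$, and the multiplier $e^{-t(\maxi{\xi}^\alpha-p^{r\alpha})}$ is bounded by $1$ and locally constant (it depends only on $\abs{\xi}_p$, which takes finitely many values on $\operatorname{supp}\widehat{\varphi}$).

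First I would show that $u(\cdot,t)$ sits inside a single finite-dimensional subspace of $\s(\Qp)$ for all $t\ge 0$: $\widehat{u}(\cdot,t)$ has support in $\operatorname{supp}\widehat{\varphi}$ and a fixed exponent of local constancy, so by the Fourier correspondence $u(\cdot,t)$ has a fixed support and fixed exponent of local constancy. Convergence in $\s(\Qp)$ thus reduces to pointwise convergence of finitely many values. The initial condition $u(x,0)=\varphi(x)$ and the continuity of $t\mapsto u(\cdot,t)$ into $\s(\Qp)$ at $t=0$ follow directly from property $(4)$ of Lemma~\ref{prop of Z}; for $t_{0}>0$, continuity is an immediate dominated-convergence argument on the Fourier side, with the dominating function $\abs{\widehat{\varphi}(\xi)}$.

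Next, for the $C^{1}$ regularity into $L^{2}(\Qp)$, I would differentiate $\widehat{u}$ in $t$ to obtain
\[
\partial_{t}\widehat{u}(\xi,t)=-(\maxi{\xi}^\alpha-p^{r\alpha})\,e^{-t(\maxi{\xi}^\alpha-p^{r\alpha})}\,\widehat{\varphi}(\xi),
\]
which is an $L^{2}$ function because $\widehat{\varphi}$ has compact support and $\maxi{\xi}^\alpha-p^{r\alpha}$ is bounded there. Convergence of the difference quotients to this derivative in $L^{2}$ is justified by dominated convergence, together with the elementary mean-value bound $\abs{e^{-(t+h)\lambda}-e^{-t\lambda}}\le\abs{h}\lambda$ for $\lambda\ge 0$. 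By Plancherel, $u\in C^{1}\bigl([0,\infty),L^{2}(\Qp)\bigr)$. Applying $\f^{-1}$ to the displayed identity and invoking the formula
\[
H^{\alpha}u(x,t)=\f^{-1}\bigl[(\maxi{\xi}^\alpha-p^{r\alpha})\,e^{-t(\maxi{\xi}^\alpha-p^{r\alpha})}\,\widehat{\varphi}(\xi)\bigr]
\]
recorded just before the theorem yields $\partial_{t}u+H^{\alpha}u=0$.

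The essential content is just to justify the interchange of $\partial_{t}$ with $\f^{-1}$ and to check continuity of $t\mapsto u(\cdot,t)$ simultaneously in the $\s(\Qp)$- and $L^{2}$-topologies. Both reduce, via the compact support of $\widehat{\varphi}$, to standard dominated-convergence arguments. The main technical point requiring care is the $C^{1}$ regularity at the endpoint $t=0$ (where only the right derivative is available); here one uses that $(\maxi{\xi}^\alpha-p^{r\alpha})\widehat{\varphi}(\xi)$ is $L^{2}$ and that the exponential factor is uniformly bounded for $t$ near $0$, which is enough to push the limit through Plancherel.
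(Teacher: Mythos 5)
Your argument is correct, but note that the paper itself does not prove this theorem at all: its ``proof'' is a citation to Theorem 3.14 of \cite{CR}, so your Fourier-side argument is a self-contained substitute rather than a rephrasing of what is printed here. What you do is the natural multiplier argument: since $\widehat{Z_r}(\xi,t)=e^{-t(\maxi{\xi}^{\alpha}-p^{r\alpha})}$ is bounded by $1$, locally constant, and identically $1$ on $B_{r}$, the function $\widehat{u}(\cdot,t)=e^{-t(\maxi{\xi}^{\alpha}-p^{r\alpha})}\widehat{\varphi}$ stays in a fixed finite-dimensional space of test functions (fixed support, fixed exponent of local constancy, both independent of $t$), so continuity in $\s(\Qp)$ reduces to finitely many scalar limits, the initial condition is Lemma \ref{prop of Z}(4), the $t$-derivative is computed on the Fourier side with the bound $\abs{e^{-(t+h)\lambda}-e^{-t\lambda}}\le \abs{h}\lambda$ and transferred by Plancherel, and the equation follows from the identity $H^{\alpha}u=\f^{-1}\bigl[(\maxi{\xi}^{\alpha}-p^{r\alpha})e^{-t(\maxi{\xi}^{\alpha}-p^{r\alpha})}\widehat{\varphi}\bigr]$ stated just before the theorem. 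One small slip: it is not $\abs{\xi}_p$ but $\maxi{\xi}=\max\{\abs{\xi}_p,p^{r}\}$ that takes only finitely many values on $\operatorname{supp}\widehat{\varphi}$ (if $0$ lies in the support, $\abs{\xi}_p$ takes infinitely many values there); since the multiplier depends only on $\maxi{\xi}$ and equals $1$ near the origin, this does not affect the argument, but the justification of local constancy and of the finite-dimensionality should be phrased in terms of $\maxi{\xi}$. With that adjustment your proof is complete and arguably more informative than the paper's bare reference; the reference, of course, buys brevity and defers the analytic details to \cite{CR}, where essentially this computation is carried out.
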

\begin{proof}
	See Theorem 3.14 in \cite{CR}.
\end{proof}


\section{$p$-adic Markov process over balls}
The space $(\Qp, |\cdot|_p)$ is a complete non-Archimedian metric space.  Let $\mathcal{B}$ the Borel $\sigma$-algebra of $\Qp$; thus $(\Qp, \mathcal{B}, dx)$ is a measure space.  By using the terminology and results of \cite[Chapters 2, 3]{Dyn}, we set
\[
p(t,x,y):=Z_r(x-y,t),\quad t>0,\ \ x,y\in\Qp
\]
and
\[
P(t,x,B)=\begin{cases}
\displaystyle{\int\limits_{B}p(t,x,y)\, dy} & \text{for } t>0,\quad x\in\Qp,\quad B\in\mathcal{B}\\
1_B(x) & \text{for $t=0.$} 
\end{cases}
\]
\begin{lemma}
	With the above notation the following assertions hold:
	\begin{enumerate}
		\item $p(t,x,y)$ is a normal transition density. 
		\item $P(t,x,B)$ is a normal transition function.
	\end{enumerate}
\end{lemma}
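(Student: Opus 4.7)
The plan is to verify directly the axioms for a normal transition density and a normal transition function as formulated in Dynkin's monograph, reducing each condition to the corresponding property of the heat kernel $Z_r$ collected in Lemma~\ref{prop of Z}. All the items of that lemma will be used, with the semigroup property (item (5)) doing the main work.

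For part (1), the function $p(t,x,y)=Z_r(x-y,t)$ is jointly continuous (hence measurable) and non-negative by items (1) and (2). The total-mass condition $\int_{\Qp} p(t,x,y)\,dy=1$ follows after the change of variable $z=x-y$ and item (3). The main step is the Chapman--Kolmogorov identity
\[
\int_{\Qp} p(t,x,z)\,p(s,z,y)\,dz = p(t+s,x,y),\qquad t,s>0.
\]
Substituting $u=x-z$ converts the left-hand side into the convolution $(Z_r(\cdot,t)\ast Z_r(\cdot,s))(x-y)$, which equals $Z_r(x-y,t+s)=p(t+s,x,y)$ by item (5).

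For part (2), I would derive the properties of $P(t,x,B)$ from those just established for $p$. Non-negativity of $Z_r$ together with countable additivity of the Lebesgue integral shows that $B\mapsto P(t,x,B)$ is a measure on $\mathcal{B}$ for each fixed $t>0$ and $x\in\Qp$; normality, $P(t,x,\Qp)=1$, is again item (3). Measurability of $x\mapsto P(t,x,B)$ follows by Fubini applied to the jointly continuous, non-negative function $(x,y)\mapsto Z_r(x-y,t)\,1_B(y)$. The Chapman--Kolmogorov identity
\[
P(t+s,x,B)=\int_{\Qp} P(s,y,B)\,P(t,x,dy)
\]
reduces, after unfolding the definitions and invoking Fubini (which is legitimate since the integrand is non-negative), to the identity already proved for $p$; the initial condition $P(0,x,B)=1_B(x)$ is built into the definition. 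The only step demanding any care is the Chapman--Kolmogorov identity, where translation invariance of the Haar measure on $\Qp$ couples with item (5) of Lemma~\ref{prop of Z} to produce exactly the required formula; every remaining clause is a direct restatement of an item of that lemma.
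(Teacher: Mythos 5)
Your proof is correct and follows the same route the paper intends: the paper's proof is just a citation of the properties of $Z_r$ (Lemma~\ref{prop of Z}) together with the definitions in Dynkin's book, and you have simply written out that verification, with the change of variables plus the semigroup property of $Z_r$ giving Chapman--Kolmogorov exactly as required. No gaps; your version is merely more explicit than the paper's one-line proof.
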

\begin{proof} 
	The result follows from Theorem 4.1 (see \cite[Section 2.1]{Dyn}, for further details).
\end{proof}

\begin{lemma}
	The transition function $P(t,x,B)$ satisfies the following two conditions:\\
	(i) \textbf{L(B)} For each $u\geq 0$ and compact $B$,
	\[
	\lim\limits_{|x|_p\to\infty}\sup\limits_{t\leq u}P(t,x,B) = 0.
	\]
	(ii) \textbf{M(B)} For each $\epsilon>0$ and compact $B$,
	\[
	\lim\limits_{t\to 0^+} \sup\limits_{x\in B}P(t,x,\Qp\setminus B_\epsilon(x))=0.
	\]
\end{lemma}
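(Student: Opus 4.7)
My plan is to exploit the crucial observation that $Z_r(\cdot,t)$ is supported in the small ball $\{|x|_p\le p^{-r}\}$. I would derive this directly from parts (1)--(3) of Lemma~\ref{prop of Z}: nonnegativity together with the identity $\int_{\Qp}Z_r\,dx=\int_{|x|_p\le p^{-r}}Z_r\,dx$ forces $Z_r(x,t)=0$ for almost every $x$ outside the ball, and continuity then upgrades this to identical vanishing there.

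With this in hand, condition \textbf{L(B)} is essentially immediate. Since $B$ is compact I can pick $N\in\Z$ with $B\subseteq B_N$. For any $x$ with $|x|_p>\max\{p^N,p^{-r}\}$ and any $y\in B$, the ultrametric property gives $|x-y|_p=|x|_p>p^{-r}$, so $Z_r(x-y,t)=0$ for all $t>0$; moreover $1_B(x)=0$ since $x\notin B$. Hence $\sup_{t\le u}P(t,x,B)$ vanishes identically for all sufficiently large $|x|_p$, which is even stronger than the stated limit.

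For \textbf{M(B)}, a translation of variables in the integral yields
\[
P(t,x,\Qp\setminus B_\epsilon(x))=\int_{|z|_p>\epsilon}Z_r(z,t)\,dz,
\]
an expression independent of $x$, so the supremum over $x\in B$ is immaterial. Picking $k\in\Z$ with $p^k\le\epsilon$ and setting $\varphi=\Omega(p^{-k}|\cdot|_p)\in\s(\Qp)$, I would apply part (4) of Lemma~\ref{prop of Z} at the single point $x=0$ to get
\[
(Z_r(\cdot,t)\ast\varphi)(0)=\int_{|z|_p\le p^k}Z_r(z,t)\,dz\longrightarrow\varphi(0)=1
\]
as $t\to 0^+$; combined with part (3) this forces $\int_{|z|_p>\epsilon}Z_r(z,t)\,dz\to 0$.

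The hard part is quite mild: the only small concern is whether the convergence in Lemma~\ref{prop of Z}(4) is strong enough to be evaluated pointwise at $x=0$ (rather than merely in some weaker sense), but since $\varphi$ is a fixed Bruhat--Schwartz function this is routine. Overall, no serious obstacle remains once the compact-support property of $Z_r$ has been identified.
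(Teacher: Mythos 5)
Your proof is correct, but it follows a genuinely different route from the paper. The paper's proof rests entirely on the kernel estimate of Lemma~\ref{prop of Z}(6): for \textbf{L(B)} it bounds $P(t,x,B)\leq Ct\,|x|_p^{-1}\bigl(\maxi{px^{-1}}^{\alpha}-p^{r\alpha}\bigr)\operatorname{vol}(B)$ for $x$ outside $B$ (where $|x-y|_p=|x|_p$), and for \textbf{M(B)} it observes that the bounding integrand vanishes once $|z|_p\geq p^{-r-1}$, so only $\epsilon<|z|_p<p^{-r-1}$ contributes and the integral of $|z|_p^{-1-\alpha}$ there is a finite constant, giving a bound $Ctp^{-1}C_1\to 0$. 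You instead bypass estimate (6) altogether: from (1)--(3) you deduce that $Z_r(\cdot,t)$ is supported in $\{|x|_p\leq p^{-r}\}$ (correct: nonnegativity and the equality of the two integrals give vanishing a.e.\ on the open set $\{|x|_p>p^{-r}\}$, and continuity upgrades it), which makes \textbf{L(B)} trivial and even yields the stronger statement that $P(t,x,B)\equiv 0$ for all large $|x|_p$; and for \textbf{M(B)} you use translation invariance to remove the $x$-dependence and the approximate-identity property (4), tested against $\Omega(p^{-k}|\cdot|_p)$ at $x=0$, together with (3) to push the mass outside $B_k$ to zero. Two small points to make explicit: property (4) as stated is pointwise in $x$, so evaluating at $x=0$ is legitimate as the lemma is written; and in the last step you pass from $\int_{|z|_p>p^k}Z_r\,dz\to 0$ to $\int_{|z|_p>\epsilon}Z_r\,dz\to 0$, which needs nonnegativity (2) and the inclusion $\{|z|_p>\epsilon\}\subseteq\{|z|_p>p^k\}$ --- trivial, but worth saying. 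Overall your argument is more elementary (it never invokes the heat-kernel bound (6)) and gives slightly sharper conclusions, while the paper's estimate-based proof is the one that generalizes when the kernel is not compactly supported.
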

\begin{proof}
	(i) By Lemma \ref{prop of Z} (6), we have
	\begin{align*}
	P(t,x,B)&=\int\limits_{B}Z_r(x-y,t)dy\\
	&\leq Ct\int\limits_{B}|x-y|_p^{-1}\left( \maxi{p(x-y)^{-1}}^{\alpha}-p^{r\alpha}\right)dy\\
	&\qquad\text{ for $x\in\Qp\setminus B$, we have $|x|_p=|x-y|_p$}\\
	&= Ct|x|_p^{-1}\left( \maxi{px^{-1}}^{\alpha}-p^{r\alpha}\right)\int\limits_{B}dy.\\
	\end{align*}
	Therefore, $\lim\limits_{|x|_p\to\infty}\sup\limits_{t\leq u}P(t,x,B) = 0.$\\
	(ii) By using Lemma \ref{prop of Z} (6), $\alpha>0$,  we have
	\begin{align*}
	P(t,x,\Qp\setminus B_\epsilon(x))&\leq Ct\int\limits_{|x-y|> \epsilon}|x-y|_p^{-1}\left( \maxi{p(x-y)^{-1}}^{\alpha}-p^{r\alpha}\right)dy\\
	&=Ct\int\limits_{|z|> \epsilon}|z|_p^{-1}\left( \maxi{p\ z^{-1}}^{\alpha}-p^{r\alpha}\right)dz
	\end{align*}
	if $p^{-r-1}\leq\epsilon<|z|_p$ or $\epsilon< p^{-r-1}\leq|z|_p$, then $\maxi{p\ z^{-1}}^{\alpha}=p^{r\alpha}$ and
	\[
	\int\limits_{|z|> \epsilon}|z|_p^{-1}\left( \maxi{p\ z^{-1}}^{\alpha}-p^{r\alpha}\right)dz=0.
	\]
	Therefore,
	\begin{align*}
	P(t,x,\Qp\setminus B_\epsilon(x))&\leq Ct\int\limits_{|x-y|> \epsilon}|x-y|_p^{-1}\left( \maxi{p(x-y)^{-1}}^{\alpha}-p^{r\alpha}\right)dy\\
	&=Ct\int\limits_{p^{-r-1}>|z|> \epsilon}|z|_p^{-1}\left( |p\ z^{-1}|_p^{\alpha}-p^{r\alpha}\right)dz\\
	&\leq Ctp^{-1}\int\limits_{p^{-r-1}>|z|> \epsilon}|z|_p^{-1-\alpha}dz\\
	&= Ctp^{-1}C_1
	\end{align*}
	Therefore, $\lim\limits_{t\to 0^+} \sup\limits_{x\in B}P(t,x,\Qp\setminus B_\epsilon(x))=0.$
\end{proof}

\begin{theorem}
	$Z_r(x,t)$ is the transition density of a time and space homogeneous Markov process, called $\mathfrak{T}(t,\omega)$, which is bounded, right-continuous and has no discontinuities other than jumps.	
\end{theorem}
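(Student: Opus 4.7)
The plan is to combine the two preceding lemmas with the standard Markov process construction theorem from Dynkin's monograph, and to explicitly verify the homogeneity that the theorem also requires but that was not recorded in the previous lemmas.

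First I would check homogeneity. Time homogeneity is immediate: $P(t,x,B)$ depends only on the elapsed time $t$, not on an absolute time coordinate, because $p(t,x,y)=Z_r(x-y,t)$ has no absolute-time dependence. Space homogeneity follows from translation invariance of the Haar measure on $\Qp$ together with the identity $Z_r((x+h)-(y+h),t)=Z_r(x-y,t)$, so that $P(t,x+h,B+h)=P(t,x,B)$ for every $h\in\Qp$ and every $B\in\mathcal{B}$.

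Next I would invoke the Dynkin construction theorem in the form stated in \cite[Chapter 3]{Dyn}, which asserts that a normal transition function on a locally compact metric space satisfying conditions L(B) and M(B) admits a strong Markov process realization whose sample paths are right-continuous and have no discontinuities other than jumps. The first of the two preceding lemmas provides normality of $P(t,x,B)$; the second verifies both L(B) and M(B). Since $(\Qp,|\cdot|_p)$ is a locally compact, totally disconnected Polish space, all hypotheses are met. Boundedness of the process on finite time intervals then follows from conservativeness $P(t,x,\Qp)=1$, which is precisely Lemma \ref{prop of Z} (3).

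The only substantive work is matching the formulation of the theorem in \cite{Dyn} with the $p$-adic setting at hand; once the conditions L(B) and M(B) have been verified (already done above), the construction is a direct citation and there is no real obstacle. In particular, the path regularity claims (right-continuity, only-jump discontinuities) are built into the conclusion of Dynkin's theorem and do not require any further $p$-adic-specific argument.
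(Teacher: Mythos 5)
Your proposal is correct and follows essentially the same route as the paper: both arguments reduce the theorem to Dynkin's construction theorem (Theorem 3.6 in \cite{Dyn}), using that $(\Qp,|\cdot|_p)$ is a locally compact Hausdorff space with a countable base and that the two preceding lemmas supply a normal transition function satisfying \textbf{L(B)} and \textbf{M(B)}. Your explicit verification of time and space homogeneity via $p(t,x,y)=Z_r(x-y,t)$ and translation invariance of the Haar measure is a small, correct addition that the paper leaves implicit.
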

\begin{proof}
	The result follows from \cite[Theorem 3.6]{Dyn} by using that $(\Qp,|x|_p)$ is a semi- compact space, i.e., a locally compact Hausdorff space with a countable base, and $P (t, x, B)$ is a normal transition function satisfying conditions{ \it\textbf{L(B)}} and {\it \textbf{M(B)}}.
\end{proof}


\section{The first passage time}

\t By Proposition \ref{solution}, the function
\begin{align}\label{solucionclas}
u(x,t)&=Z_r(x,t)\ast\Omega(\left| x\right| _p)=\int\limits_{\Qp}\chi(-x\xi)e^{-t\left( \maxi{\xi}^\alpha-p^{r\alpha}\right) }\Omega(|\xi|_p)\ d\xi
\end{align}
is a solution of
\begin{equation}
\begin{cases}
\dfrac{\partial u}{\partial t}(x,t)+(H^{\alpha}u)(x,t)=0, & x\in\Qp,\ t>0,\\
u(x,0)=\Omega(\left| x\right| _p). &
\end{cases}
\end{equation}

\t Among other properties, the function $u(x, t) = Z_r(x,t)\ast\Omega(\left| x\right|_p),\ t\geq 0$, is pointwise differentiable in $t$ and, by using the Dominated Convergence Theorem, we can show that  its derivative is given by the formula
	\begin{equation}\label{derivada}
	\dfrac{\partial u}{\partial t}(x, t) = \int\limits_{\Qp}\chi_p(-x \xi)\left( \maxi{\xi}^\alpha-p^{r\alpha}\right) e^{-t\left( \maxi{\xi}^\alpha-p^{r\alpha}\right) }\Omega(|\xi|_p)\ d\xi.
	\end{equation}

\begin{lemma}
	If $\alpha>0$ and $r<0$, then
	\[
	0<-\int\limits_{1<|y|_p\leq p^{-r}}K_\alpha(y)dy<1.
	\]
\end{lemma}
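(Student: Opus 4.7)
The plan is to compute $-I := -\int_{1 < |y|_p \leq p^{-r}} K_\alpha(y)\,dy$ in closed form and then verify the two bounds separately. Since $r < 0$ and $|y|_p \leq p^{-r}$ force $p^r|y|_p \leq 1$, the factor $\Omega(p^r|y|_p)$ is identically $1$ on the annulus, so the definition of $K_\alpha$ reduces to its bracketed expression (we are in the generic case $\alpha \neq -1+\alpha_k$ since $\alpha > 0$ is real). Decomposing the annulus into the spheres $\{|y|_p = p^j\}$ for $j = 1, 2, \ldots, -r$, on each of which $K_\alpha$ takes a constant value and the Haar measure equals $(1-p^{-1})p^j$, reduces the integral to a finite geometric sum. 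After elementary simplification,
\[
-I = \frac{p^{\alpha}(p-1)\bigl(1-p^{r\alpha}\bigr) - (p^{\alpha}-1)\,p^{r\alpha}(1-p^r)}{p^{\alpha+1}-1}.
\]

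The upper bound $-I < 1$ is the easier of the two. Multiplying through by the positive denominator $p^{\alpha+1}-1$ and rearranging reduces it to $(p^{\alpha}-1)(p^{r(\alpha+1)}-1) < p^{r\alpha}(p^{\alpha+1}-1)$. Because $\alpha > 0$ and $r < 0$ give $p^{\alpha} > 1 > p^{r(\alpha+1)}$, the left-hand side is negative while the right-hand side is positive, so the inequality is immediate.

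The lower bound $-I > 0$ is the delicate step. Setting $q := p^{\alpha} > 1$ and $s := p^r \in (0, p^{-1}]$, the claim reduces to showing $q(p-1) > s^{\alpha}\bigl[(qp-1) - (q-1)s\bigr]$. I would view the right-hand side as a function $f(s) := s^{\alpha}\bigl[(qp-1)-(q-1)s\bigr]$ on $[0,1]$, note that $f(0) = 0$ and $f(1) = q(p-1)$, and show that $f$ is strictly increasing on $[0,1]$. A direct differentiation yields
\[
f'(s) = s^{\alpha-1}\bigl[\alpha(qp-1) - (\alpha+1)(q-1)s\bigr],
\]
so monotonicity on $[0,1]$ holds as soon as $\alpha(qp-1) \geq (\alpha+1)(q-1)$, which after simplification becomes $\alpha\,p^{\alpha}(p-1) \geq p^{\alpha}-1$, equivalently $\alpha(p-1) \geq 1-p^{-\alpha}$. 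This last inequality follows from the elementary chain $1 - p^{-\alpha} \leq \alpha\ln p < \alpha(p-1)$, where the first step uses $e^{-x} \geq 1-x$ and the second uses $\ln p < p-1$ for every prime $p \geq 2$. Since $s = p^r < 1$, we conclude $f(s) < f(1) = q(p-1)$, which is exactly $-I > 0$.

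The main obstacle is precisely this monotonicity step for $f$, which at bottom rests on the standard estimate $\ln p < p-1$; the remainder of the argument is careful bookkeeping of geometric sums on $p$-adic spheres together with sign analysis.
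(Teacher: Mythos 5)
Your proof is correct, and your closed form for $-I$ agrees with what the paper's computation yields (one can check it, e.g., at $p=2$, $\alpha=1$, $r=-1$, where both give $1/4$). The starting point is the same as the paper's: on the annulus $1<|y|_p\leq p^{-r}$ the factor $\Omega(p^r|y|_p)$ is $1$, and the integral reduces to geometric sums over spheres. Where you diverge is in how the two inequalities are finished. For the upper bound the paper does not compute exactly: it enlarges $\int_{1<|y|_p\leq p^{-r}}|y|_p^{-\alpha-1}dy$ to the full integral over $|y|_p>1$, which evaluates to $\frac{1-p^{-1}}{1-p^{-\alpha-1}}=1-\frac{1-p^\alpha}{1-p^{\alpha+1}}$ and makes the bound $<1$ immediate; your exact computation followed by the rearrangement $(p^\alpha-1)(p^{r(\alpha+1)}-1)<p^{r\alpha}(p^{\alpha+1}-1)$ with a sign check is an equally clean alternative. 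For the lower bound the paper simply drops the factor $(1-p^r)<1$ from the negative term and asserts positivity of $\frac{p^\alpha(p-1)(1-p^{r\alpha})+p^{r\alpha}(1-p^\alpha)}{p^{\alpha+1}-1}$ without further justification (it does hold, via $p^{r\alpha}\leq p^{-\alpha}$ and $p-1\geq 1$), whereas you give a complete argument by treating $s=p^r$ as a continuous variable and proving monotonicity of $f(s)=s^\alpha[(qp-1)-(q-1)s]$ using $1-p^{-\alpha}\leq\alpha\ln p<\alpha(p-1)$. Your route is more rigorous than the paper's terse ``$>0$'' but also heavier than necessary: since $r\leq -1$ gives $p^{r\alpha}\leq p^{-\alpha}$, the discrete inequality $q(p-1)(1-p^{r\alpha})>(q-1)p^{r\alpha}(1-p^r)$ follows in one line without any calculus. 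Both arguments are valid; yours has the added benefit of recording the exact value of the integral, which the paper never writes down.
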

\begin{proof}
	\begin{align*}
	-\int\limits_{1<|y|_p\leq p^{-r}}K_\alpha(y)dy&=\dfrac{1-p^\alpha}{1-p^{\alpha+1}}\left[p^{\alpha+1}\int\limits_{1<|y|_p\leq p^{-r}}\dfrac{1}{|y|_p^{\alpha+1}}dy- p^{r(\alpha+1)}\int\limits_{1<|y|_p\leq p^{-r}}dy\right] \\
	&<\dfrac{1-p^\alpha}{1-p^{\alpha+1}}\left[p^{\alpha+1}\int\limits_{1<|y|_p}\dfrac{1}{|y|_p^{\alpha+1}}dy- p^{r(\alpha+1)}(p^{-r}-1)\right]\\
	&=\dfrac{1-p^{-1}}{1-p^{-\alpha-1}}- \dfrac{1-p^\alpha}{1-p^{\alpha+1}}p^{r\alpha}(1-p^{r})\\
	&=1-\dfrac{1-p^{\alpha}}{1-p^{\alpha+1}}\left( 1+p^{r\alpha}(1-p^{r})\right) \\
	&<1.
	\end{align*}
	Now
	\begin{align*}
	-\int\limits_{1<|y|_p\leq p^{-r}}K_\alpha(y)dy
	&=\dfrac{1-p^\alpha}{1-p^{\alpha+1}}\left[p^{\alpha+1}\int\limits_{1<|y|_p\leq p^{-r}}\dfrac{1}{|y|_p^{\alpha+1}}dy- p^{r(\alpha+1)}\int\limits_{1<|y|_p\leq p^{-r}}dy\right] \\
	&>\dfrac{p^\alpha(p-1)(1-p^{r\alpha})}{p^{\alpha+1}-1}+\dfrac{p^{r\alpha}(1-p^\alpha)}{p^{\alpha+1}-1}\\
	&>0.
	\end{align*}
\end{proof}

\t The rest of this section is dedicated to the study of the following random variable.

\begin{definition}
	\label{randonvarible} The random variable $\tau_{\Omega(|x|_p)}(\omega) :\mathfrak{Y}\to \R_+$ defined by \\
{\small $$\inf\{t> 0\mid \mathfrak{T}(t,\omega) \in\Omega(|x|_p) \text{ and there exists $t'$ such that  $0 <t'< t$ and $\mathfrak{T}(t',\omega) \notin\Omega(|x|_p)$}\}$$}
\t is called the first passage time of a path of the random process $\mathfrak{T}(t,\omega)$ entering the domain $\Omega(|x|_p)$.
\end{definition}

\begin{lemma}\label{lemma12}
	The probability density function for a path of $\mathfrak{T}(t,\omega)$ to enter into $\Omega(|x|_p)$ at the instant of time $t$, with the condition that $\mathfrak{T}(0,\omega)\in\Omega(|x|_p)$ is given by
	\begin{equation}\label{gequa}
	g(t)=\int\limits_{1<|y|_p\leq p^{-r}}K_\alpha(y)u(y,t)dy.
	\end{equation}
\end{lemma}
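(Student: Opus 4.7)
The plan is to deduce (\ref{gequa}) directly from the Cauchy problem that $u(x,t)$ satisfies, using the integral representation of $H^{\alpha}$ together with the ultrametric translation invariance of $\Omega(|x|_p)$ and the radial symmetry of the heat kernel. By Theorem \ref{solution} applied to $\varphi=\Omega(|\cdot|_p)$ we have $\partial_t u=-H^{\alpha}u$ on $(0,\infty)$, and by Theorem \ref{transf-K} together with $\int_{\Qp} K_{\alpha}(y)\,dy=\maxi{0}^{\alpha}=p^{r\alpha}$ the operator admits the representation
\[
(H^{\alpha}u)(x,t)=\int_{|y|_p\le p^{-r}} K_{\alpha}(y)\bigl[u(x-y,t)-u(x,t)\bigr]\,dy,
\]
the (non-integrable) singularity of $K_{\alpha}$ at $0$ being neutralised by the local constancy of $u(\cdot,t)-u(x,t)$ near the origin.

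The main calculation is to evaluate this identity at $x=0$ and split the integration at $|y|_p=1$. Radiality of $Z_r(\cdot,t)$ and $\Omega(|\cdot|_p)$ gives radiality of $u(\cdot,t)$, hence $u(-y,t)=u(y,t)$, so the integrand symmetrises. For $|y|_p\le 1$ the strong triangle inequality gives $\Omega(|\cdot|_p)+y=\Omega(|\cdot|_p)$; combining this with the spatial homogeneity of $\mathfrak{T}$ (equivalently, via the change of variables $z\mapsto z-y$ in $u(y,t)=\int Z_r(y-z,t)\Omega(|z|_p)\,dz$) yields $u(y,t)=u(0,t)$, so the integrand vanishes pointwise on the inner ball. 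What remains, after using $\partial_t u(0,t)=-(H^{\alpha}u)(0,t)$, is
\[
\partial_t u(0,t)=u(0,t)\int_{1<|y|_p\le p^{-r}} K_{\alpha}(y)\,dy-\int_{1<|y|_p\le p^{-r}} K_{\alpha}(y)\,u(y,t)\,dy,
\]
and the second integral on the right is exactly the $g(t)$ of (\ref{gequa}).

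To close the argument I would read the display above as a probability balance for the occupation $u(0,t)=\mathbb{P}_{0}\bigl[\mathfrak{T}(t,\omega)\in\Omega(|x|_p)\bigr]$: the first term on the right encodes the rate at which the process leaves $\Omega(|x|_p)$ weighted by the current occupation, and $\partial_t u(0,t)$ is the net rate of change of that occupation, so the leftover term is precisely the influx into $\Omega(|x|_p)$ at time $t$; the strong Markov property of $\mathfrak{T}$ and Definition \ref{randonvarible} then identify this influx with the density of the first passage time $\tau_{\Omega(|x|_p)}$. The hard part will be making this probabilistic identification rigorous: $K_{\alpha}$ is signed and non-integrable near $0$, so the manipulations at $x=0$ rely on the locally constant structure of $u(\cdot,t)\in\s(\Qp)$ near the origin, and one must also match the entry-contribution coming from the annulus $1<|y|_p\le p^{-r}$ with the jump structure of $\mathfrak{T}$ as encoded in $K_{\alpha}$.
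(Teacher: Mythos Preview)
Your approach is essentially the paper's: both start from $\partial_t u=-H^{\alpha}u$, rewrite $H^{\alpha}$ via the integral representation with kernel $K_{\alpha}$, use the constancy of $u(\cdot,t)$ on $|y|_p\le 1$ to kill the inner-ball contribution, and then read the resulting identity on the annulus $1<|y|_p\le p^{-r}$ as a probability balance $\text{(rate of change)}=\text{(influx)}-\text{(outflux)}$, thereby identifying the influx term with $g(t)$.

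The only difference is cosmetic. The paper works with the survival probability $S(t)=\int_{|x|_p\le 1}u(x,t)\,dx$ and differentiates under the integral, whereas you evaluate pointwise at $x=0$. Because $u(\cdot,t)$ is constant on the unit ball (a fact both you and the paper establish), one has $S(t)=u(0,t)$, so the two computations are literally the same identity. The paper also needs the observation that $u(x-y,t)=u(y,t)$ for $|x|_p\le 1$ and $1<|y|_p\le p^{-r}$, which in your pointwise version at $x=0$ reduces to the radiality $u(-y,t)=u(y,t)$. Neither argument makes the final probabilistic identification of $g(t)$ fully rigorous; the paper, like you, simply asserts the balance $S'(t)=g(t)-C\,S(t)$ with $C=-\int_{1<|y|_p\le p^{-r}}K_{\alpha}(y)\,dy$.
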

\begin{proof}
	We first note that, for $x,y\in\Omega(|z|_p)$, we have
	\begin{align*}
	u(x-y,t)&=\int\limits_{\Omega(|\xi|_p)}\chi_p(-(x-y)\cdot\xi)e^{-t(\maxi{\xi}^\alpha-p^{r\alpha})}\ d\xi\\
	&=\int\limits_{\Omega(|\xi|_p)}e^{-t(\maxi{\xi}^\alpha-p^{r\alpha})}\ d\xi
	=\int\limits_{\Omega(|\xi|_p)}\chi_p(-x\cdot\xi)e^{-t(\maxi{\xi}^\alpha-p^{r\alpha})}\ d\xi\\
	&=u(x,t).
	\end{align*}
	i.e. $u(x-y,t)-u(x,t)\equiv0$ for $x,y\in\Omega(|z|_p)$.
	
	The survival probability, by definition 
	\[
	S(t):=S_{\Omega(|x|_p)}(t)=\int_{\Omega(|x|_p)}u(x,t)d^nx,
	\]
	is the probability that a path of $\mathfrak{T}(t,\omega)$ remains in $\Omega(|x|_p)$ at the time $t$. Because there are no external or internal sources,
	\begin{align*}
	S^{\prime}(t)  &  =%
	\begin{array}
	[c]{l}%
	\text{Probability that a path of }\mathfrak{T}(t,\omega)\\
	\text{goes back to }\Omega(|x|_p)\text{ at the time }t
	\end{array}
	-%
	\begin{array}
	[c]{l}%
	\text{Probability that a path of }\mathfrak{T}(t,\omega)\\
	\text{exits }\Omega(|x|_p)\text{ at the time }t.
	\end{array}
	\label{S}\\
	&  =g(t)-C\cdot S(t)\text{ with }0<C\leq1.\nonumber
	\end{align*}
	by using the derivative \eqref{derivada}
	\begin{align*}
	S^{\prime}(t)&=\underset{\Omega(|x|_p)}{\int}\frac{\partial u
		(x,t)}{\partial t}dx\\
	&=-\dfrac{1-p^\alpha}{1-p^{\alpha+1}}\left[ p^{r(\alpha+1)}\int\limits_{|x|_p\leq 1}\int_{1<|y|_p\leq p^{-r}}u(x-y,t)-u(x,t)\ dydx\right. \\
	&\qquad\qquad\qquad\ \left. -p^{(\alpha+1)}\int\limits_{|x|_p\leq 1}\int_{1<|y|_p\leq p^{-r}}\dfrac{u(x-y,t)-u(x,t)}{|y|_p^{\alpha+1}}\ dydx\right]\\
	&=-\dfrac{1-p^\alpha}{1-p^{\alpha+1}}\left[ p^{r(\alpha+1)}\int\limits_{|x|_p\leq 1}\int_{1<|y|_p\leq p^{-r}}u(x-y,t)\ dydx\right. \\
	&\qquad\qquad\qquad\ \left. -p^{(\alpha+1)}\int\limits_{|x|_p\leq 1}\int_{1<|y|_p\leq p^{-r}}\dfrac{u(x-y,t)}{|y|_p^{\alpha+1}}\ dydx\right]\\
	&\quad+\dfrac{1-p^\alpha}{1-p^{\alpha+1}}\left[ p^{r(\alpha+1)}\int\limits_{|x|_p\leq 1}\int_{1<|y|_p\leq p^{-r}}u(x,t)\ dydx\right. \\
	&\qquad\qquad\qquad\ \left. -p^{(\alpha+1)}\int\limits_{|x|_p\leq 1}\int_{1<|y|_p\leq p^{-r}}\dfrac{u(x,t)}{|y|_p^{\alpha+1}}\ dydx\right].
	\end{align*}
	Now if $y\in\Omega(p^r|y|_p)\setminus\Omega(|y|_p)$ and $x\in\Omega(|x|_p)$, then		$u(x-y,t)=u(y,t)$, consequently
	\begin{align*}
	S'(t)&=\int\limits_{1<|y|_p\leq p^{-r}}K_\alpha(y)u(y,t)dy+\int\limits_{1<|y|_p\leq p^{-r}}K_\alpha(y)dy\int\limits_{|x|_p\leq 1}u(x,t)\ dx\\
	&=g(t)-CS(t),
	\end{align*}
	where $\displaystyle{C=-\int\limits_{1<|y|_p\leq p^{-r}}K_\alpha(y)dy}$.
\end{proof}
\begin{proposition}
	\label{prop4}The probability density function $f(t)$ of the random variable
	$\tau_{\Omega(|x|_p)}(\omega)$ satisfies the non-homogeneous Volterra equation of second kind%
	\begin{equation}
	g(t)=\int_{0}^{\infty}g(t-\tau)f(\tau)d\tau+f(t). \label{eq:Volterra}%
	\end{equation}
\end{proposition}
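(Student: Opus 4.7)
The plan is to prove the Volterra equation by a renewal-type argument which decomposes the event ``a path re-enters $\Omega(|x|_p)$ at time $t$'' according to whether this is the \emph{first} re-entry or a subsequent one.

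First I would clarify the probabilistic meaning of the two densities in play. By Lemma \ref{lemma12}, $g(t)$ is the density, at time $t$, of \emph{any} re-entry of a path of $\mathfrak{T}(t,\omega)$ into $\Omega(|x|_p)$ (conditional on $\mathfrak{T}(0,\omega)\in\Omega(|x|_p)$); whereas $f(t)$, by Definition \ref{randonvarible}, is the density of only the very first such re-entry. Let $\tau_1<\tau_2<\cdots$ denote the successive instants at which a path enters $\Omega(|x|_p)$ after having previously left it, and let $f_k$ be the density of $\tau_k$, so $\tau_1=\tau_{\Omega(|x|_p)}$ and $f_1=f$. Partitioning any re-entry at time $t$ according to its ordinal index gives the formal identity
\[
g(t)=\sum_{k=1}^{\infty}f_k(t).
\]

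Next, I would invoke the strong Markov property together with the time and space homogeneity of $\mathfrak{T}$ (guaranteed by the theorem of Section 4 asserting that $Z_r$ is the transition density of $\mathfrak{T}$). Conditionally on $\tau_1=\tau$, the shifted process $\mathfrak{T}(\tau+\,\cdot\,,\omega)$ is distributed as an independent copy of $\mathfrak{T}$ started from a point of $\Omega(|x|_p)$. Consequently $\tau_k-\tau_1$ has density $f_{k-1}$, which yields the convolution recursion
\[
f_k(t)=\int_0^t f(\tau)\,f_{k-1}(t-\tau)\,d\tau,\qquad k\geq 2.
\]
Substituting this back into the series expansion of $g$, and interchanging summation and integration (legitimate by Tonelli's theorem, since each $f_k\geq 0$), I obtain
\[
g(t)=f(t)+\int_0^t f(\tau)\sum_{k=1}^{\infty}f_k(t-\tau)\,d\tau=f(t)+\int_0^t f(\tau)\,g(t-\tau)\,d\tau.
\]
Extending the upper limit of integration to $\infty$ using that $g$ vanishes on the negative half-line recovers equation \eqref{eq:Volterra}.

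The main obstacle will be the rigorous justification of the strong Markov property at the random time $\tau_{\Omega(|x|_p)}$: one has to check that $\tau_1$ is genuinely a stopping time with respect to a suitably right-continuous filtration, and that the post-$\tau_1$ process is truly an independent copy of $\mathfrak{T}$ started in $\Omega(|x|_p)$. This is where the path regularity established in the theorem of the previous section becomes essential, since $\mathfrak{T}$ was shown to be bounded, right-continuous and without discontinuities other than jumps, so the standard machinery of Chapters 3--4 of \cite{Dyn} applies.
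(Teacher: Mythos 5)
Your renewal argument is correct and is essentially the argument the paper relies on: its proof of Proposition \ref{prop4} is a one-line citation of Theorem 1 of \cite{Av-2}, whose content is precisely this conditioning on the first re-entry via the strong Markov property and time/space homogeneity, and your series over the successive entry times $\tau_k$ merely repackages the one-step decomposition $g=f+f\ast g$. The only point worth making explicit (at the heuristic level of rigor shared by the paper and \cite{Av-2}) is that identifying the post-$\tau_1$ re-entry density with $g$ itself requires that $g$ not depend on the particular starting point inside $\Omega(|x|_p)$, which follows from the invariance $u(x-y,t)=u(x,t)$ for $x,y\in\Omega(|x|_p)$ noted in the proof of Lemma \ref{lemma12}.
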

\begin{proof}
	The result follows from Lemma \ref{lemma12}\ by using the argument given in
	the proof \ of Theorem 1 in \cite{Av-2}.
\end{proof}
\begin{proposition}
	The Laplace transform $G_r(s)$ of $g(t)$ is given by 
	\[
	G_r(s)=\int\limits_{1<|y|_p\leq p^{-r}}K_\alpha(y)\int\limits_{|\xi|_p\leq1}\dfrac{\chi_p(-\xi\cdot y)}{s+(\maxi{\xi}^\alpha-p^{r\alpha})}\ d\xi dy.
	\]
\end{proposition}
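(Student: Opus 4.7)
The plan is to apply the Laplace transform directly to the expression for $g(t)$ given in Lemma \ref{lemma12}, substitute the explicit Fourier integral representation of $u(y,t)$ from \eqref{solucionclas}, and then interchange the order of the three integrations using Fubini's theorem. Once the $t$-integration is brought to the inside, it can be evaluated in closed form and yields exactly the denominator $s+\maxi{\xi}^\alpha-p^{r\alpha}$ appearing in the statement.

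More concretely, I would start from
\[
G_r(s) \;=\; \int_0^\infty e^{-st}\,g(t)\,dt \;=\; \int_0^\infty e^{-st}\!\!\int\limits_{1<|y|_p\le p^{-r}}\!\! K_\alpha(y)\,u(y,t)\,dy\,dt,
\]
replace $u(y,t)$ by $\int_{|\xi|_p\le 1}\chi_p(-y\xi)\,e^{-t(\maxi{\xi}^\alpha-p^{r\alpha})}\,d\xi$, and then pull the $t$-integral past the two spatial integrals. The inner elementary identity
\[
\int_0^\infty e^{-t(s+\maxi{\xi}^\alpha-p^{r\alpha})}\,dt \;=\; \frac{1}{s+\maxi{\xi}^\alpha-p^{r\alpha}},
\]
valid for $s>0$ since $\maxi{\xi}^\alpha-p^{r\alpha}\ge 0$ on the integration domain, produces the kernel of the claimed formula. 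Note that this includes the subregion $|\xi|_p\le p^r$, where the exponent vanishes and the factor reduces to $1/s$, accounting for the constant-in-$t$ part of $u(y,t)$.

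The only substantive step is justifying the interchange of integrals, which I expect to be routine rather than a genuine obstacle. The character factor satisfies $|\chi_p(-y\xi)|=1$ and the heat-kernel factor $e^{-t(\maxi{\xi}^\alpha-p^{r\alpha})}\le 1$, so the absolute value of the integrand is dominated by $e^{-st}\,|K_\alpha(y)|\,\mathbf{1}_{|\xi|_p\le 1}$. The definition of $K_\alpha$ shows it is locally bounded on the bounded annulus $1<|y|_p\le p^{-r}$, hence integrable there; the unit ball in $\xi$ has finite Haar measure; and $\int_0^\infty e^{-st}\,dt<\infty$ for $s>0$. Fubini's theorem therefore applies, and assembling the pieces gives precisely the stated expression for $G_r(s)$.
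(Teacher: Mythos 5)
Your proposal is correct and follows essentially the same route as the paper: both substitute the representation \eqref{solucionclas} of $u(y,t)$ into \eqref{gequa}, justify the interchange of the $t$-, $y$- and $\xi$-integrations by checking absolute integrability (Fubini), and evaluate the inner Laplace integral to produce the factor $1/\bigl(s+\maxi{\xi}^\alpha-p^{r\alpha}\bigr)$. Your domination argument merely spells out the $\mathcal{L}^1$ claim that the paper states without detail.
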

\begin{proof}
	We first note that $e^{-st}K_\alpha(y)e^{-t(\maxi{\xi}^\alpha-p^{r\alpha})}\Omega(|\xi|_p)\in \mathcal{L}^1((0,\infty) \times\Omega(p^r|\xi|_p)\setminus \Omega(|\xi|_p)\times\Qp, dtdyd\xi)$ for $s\in\C$ with $Re(s) >0$. The announced formula follows now from \eqref{gequa} and \eqref{solucionclas} by using Fubini's Theorem.
\end{proof}
\begin{definition}
	We say that $\mathfrak{T}(t,\omega)$ is recurrent with respect to $\Omega(|x|_p)$ if 
	\begin{equation}\label{retorno}
	P(\{\omega \in \mathfrak{Y} : \tau_{\Omega(|x|_p)} (\omega)<\infty\}) = 1.
	\end{equation}
	Otherwise, we say that $\mathfrak{T}(t,\omega)$ is transient with respect to $\Omega(|x|_p)$ .
\end{definition}
The meaning of \eqref{retorno} is that every path of $\mathfrak{T}(t,\omega)$ is sure to return to $\Omega(|x|_p)$. If \eqref{retorno} does not hold, then there exist paths of $\mathfrak{T}(t,\omega)$ that abandon $\Omega(|x|_p)$ and never go back.
\begin{theorem}
	For all $\alpha>0$ the processes $\mathfrak{T}(t,\omega)$ is recurrent with respect to $\Omega(|x|_p)$.
\end{theorem}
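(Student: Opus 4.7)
The plan is to reduce recurrence to the blow-up of $G_r(s)$ as $s\to 0^+$. By definition, $\mathfrak{T}(t,\omega)$ is recurrent with respect to $\Omega(|x|_p)$ iff $\int_0^\infty f(t)\,dt=1$, where $f$ is the density of $\tau_{\Omega(|x|_p)}$. Taking the Laplace transform of the Volterra equation of Proposition \ref{prop4} yields $F(s)=G_r(s)/(1+G_r(s))$, so by the Abelian theorem
$$P(\tau_{\Omega(|x|_p)}<\infty)=\lim_{s\to 0^+}F(s)=\lim_{s\to 0^+}\frac{G_r(s)}{1+G_r(s)}.$$
This limit equals $1$ as soon as $|G_r(s)|\to\infty$ as $s\to 0^+$, so the whole problem reduces to proving that $G_r(s)$ blows up at the origin.

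The singularity is extracted from the explicit formula for $G_r(s)$. Since $r<0$, the disk $|\xi|_p\leq p^r$ sits strictly inside $|\xi|_p\leq 1$ and on it $\maxi{\xi}^\alpha-p^{r\alpha}$ vanishes identically, producing a $1/s$ factor. I would therefore split the inner $\xi$-integral as
$$\int_{|\xi|_p\leq 1}\frac{\chi_p(-\xi y)}{s+\maxi{\xi}^\alpha-p^{r\alpha}}\,d\xi=\frac{1}{s}\int_{|\xi|_p\leq p^r}\chi_p(-\xi y)\,d\xi+\int_{p^r<|\xi|_p\leq 1}\frac{\chi_p(-\xi y)}{s+|\xi|_p^\alpha-p^{r\alpha}}\,d\xi.$$

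For the first piece, the standard $p$-adic Fourier orthogonality identity gives $\int_{|\xi|_p\leq p^r}\chi_p(-\xi y)\,d\xi=p^r\,\Omega(p^r|y|_p)$, which equals $p^r$ throughout $1<|y|_p\leq p^{-r}$ since $p^r|y|_p\leq 1$ there. Hence this piece contributes
$$\frac{p^r}{s}\int_{1<|y|_p\leq p^{-r}}K_\alpha(y)\,dy$$
to $G_r(s)$, and by the preceding lemma the coefficient $\int K_\alpha\,dy$ is a non-zero constant in $(-1,0)$. For the second piece, on $p^r<|\xi|_p\leq 1$ one has $|\xi|_p^\alpha-p^{r\alpha}\geq p^{r\alpha}(p^\alpha-1)>0$, so the integrand is dominated uniformly for $s\in(0,1]$ by an integrable function on the bounded product domain $\{1<|y|_p\leq p^{-r}\}\times\{p^r<|\xi|_p\leq 1\}$. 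Dominated convergence then shows this contribution has a finite limit as $s\to 0^+$.

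Putting the two pieces together, $G_r(s)$ has a genuine $1/s$ singularity with non-zero residue, so $|G_r(s)|\to\infty$, hence $F(s)\to 1$, and $\mathfrak{T}(t,\omega)$ is recurrent for every $\alpha>0$. The main technical point is the uniform domination on the outer annulus $p^r<|\xi|_p\leq 1$, which guarantees that the regular part cannot cancel the $1/s$ pole; once this domination is secured, the recurrence drops out of the Volterra/Laplace framework without further work.
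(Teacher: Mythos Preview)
Your proof is correct and follows essentially the same strategy as the paper: reduce recurrence to $F(s)=G_r(s)/(1+G_r(s))\to 1$, split the $\xi$-integral at $|\xi|_p=p^r$ to isolate the $p^r/s$ pole with nonzero coefficient (via the preceding lemma), and check that the contribution from the annulus $p^r<|\xi|_p\leq 1$ stays bounded. The only cosmetic difference is that the paper expands the annulus contribution into an explicit double sum over dyadic shells, whereas you dispose of it directly by dominated convergence using the uniform lower bound $|\xi|_p^\alpha-p^{r\alpha}\geq p^{r\alpha}(p^\alpha-1)$; your treatment is in fact slightly cleaner on this point.
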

\begin{proof}
	By Proposition \ref{prop4}, the Laplace transform $\ F(s)$ of $\ f(t)$
	equals $\dfrac{G_r(s)}{1+G_r(s)}$, where $G_r(s)$ is the Laplace transform of $g(t)$,
	and thus 
	\[F\left(  0\right)  =\int_{0}^{\infty}f\left(  t\right)
	dt=1-\frac{1}{1+G_r(0)}.\] 
	Hence in order to prove that $\mathfrak{T}(t,\omega)$ is recurrent is sufficient to show that
	\[
	G_r(0)=\lim\limits_{s\rightarrow0}G_r(s)=\infty,
	\]
	and to prove that it is transient that
	\[
	G_r(0)=\lim\limits_{s\rightarrow0}G_r(s)<\infty.
	\]
	\begin{align*}
	G_r(s)&=\int\limits_{1<|y|_p\leq p^{-r}}\int\limits_{|\xi|_p\leq p^r}\dfrac{K_\alpha(y)\chi(-\xi y)}{s}d\xi dy+\int\limits_{1<|y|_p\leq p^{-r}}\int\limits_{p^r<|\xi|_p\leq 1}\dfrac{K_\alpha(y)\chi(-\xi y)}{s+|\xi|_p^\alpha-p^{r\alpha}}d\xi dy\\
	&=\dfrac{p^r}{s}\int\limits_{1<|y|_p\leq p^{-r}}K_\alpha(y)dy+\int\limits_{1<|y|_p\leq p^{-r}}\int\limits_{p^r<|\xi|_p\leq 1}\dfrac{K_\alpha(y)\chi(-\xi y)}{s+|\xi|_p^\alpha-p^{r\alpha}}d\xi dy\\
	&=\dfrac{p^r}{s}\int\limits_{1<|y|_p\leq p^{-r}}K_\alpha(y)dy+\sum\limits_{k=1}^{-r}\sum\limits_{m=0}^{k-1}\dfrac{p^{k-m}}{s+p^{-m\alpha}-p^{r\alpha}}\int\limits_{|u|_p=1}K_{\alpha}(p^{-k}u)du\\
	&\quad+\sum\limits_{k=1}^{-r}\sum\limits_{m=k}^{-r-1}\dfrac{p^{k-m}}{s+p^{-m\alpha}-p^{r\alpha}}\int\limits_{|u|_p=1}\int\limits_{|v|_p=1}K_{\alpha}(p^{-k}u)\chi(-p^{m-k}uv)dvdu\\
	\end{align*}
	therefore $\displaystyle{\lim\limits_{s\rightarrow0}G_r(s)=\infty}$ and the process $\mathfrak{T}(t,\omega)$ is recurrent.
\end{proof}

\end{document}